\newtheorem{theorem}{{Theorem}}
\newtheorem{lemma}{{Lemma}}
\newtheorem{proof}{Proof}[section]
\newcommand{\ls}[1]
    {\dimen0=\fontdimen6\the\font
     \lineskip=#1\dimen0
     \advance\lineskip.5\fontdimen5\the\font
     \advance\lineskip-\dimen0
     \lineskiplimit=.9\lineskip
     \baselineskip=\lineskip
     \advance\baselineskip\dimen0
     \normallineskip\lineskip
     \normallineskiplimit\lineskiplimit
     \normalbaselineskip\baselineskip
     \ignorespaces
    }
\begin{document}
%
% paper title
% can use linebreaks \\ within to get better formatting as desired
%\title{Performance Analysis for\\ Magnetic Induction Based Multi-Band\\ Underground Emergency Communications with Diverse Medium Fading}
\title{Multi-Frequency Resonating Based Magnetic Induction Underground Emergency Communications with Diverse Mediums}
\author{Jianyu~Wang,~\IEEEmembership{Member,~IEEE,}
        Zhichao~Li,
        Wenchi~Cheng,~\IEEEmembership{Senior~Member,~IEEE,}
        Wei~Zhang,~\IEEEmembership{Fellow,~IEEE,}
        and~Hailin~Zhang,~\IEEEmembership{Member,~IEEE}
        % <-this % stops a space
\thanks{%This work was supported in part by National Key R\&D Program of China under Grants 2021YFC3002102 and 2020TFA0711400.

Jianyu Wang, Zhichao Li, Wenchi Cheng, and Hailin Zhang are with State Key Laboratory of Integrated Services Networks, Xidian University, Xi'an, 710071, China (e-mails: wangjianyu@xidian.edu.cn; 24011210926@stu.xidian.edu.cn; wccheng@xidian.edu.cn; hlzhang@xidian.edu.cn)% <-this % stops a space

Wei Zhang is with School of Electrical Engineering and Telecommunications, the University of New South Wales, Sydney, Australia (e-mail: w.zhang@unsw.edu.au).
}% <-this % stops a space
}

\maketitle
\thispagestyle{empty}
\begin{abstract}
Magnetic induction (MI) communication is an effective underground emergency communication technique after disasters such as landslides, mine collapses, and earthquakes, due to its advantages in mediums such as soil, concrete, and metals.
However, the propagation mediums in practical MI based underground emergency communications are usually diverse and composed randomly due to the impact of disasters, which poses a challenge for MI communication in practical applications.
In this paper, we formulate a statistical fading channel model, which reflects the random composition of diverse mediums and is shown to follow a lognormal distribution.
To mitigate the impact of diverse medium fading, Multi-frequency Resonating Compensation (MuReC) based coils are used to achieve multi-band transmission.
Then, we analyze the performance of MuReC based multi-band MI communication with diverse medium fading and derive the expressions of signal-to-noise ratio (SNR) probability density functions, ergodic capaciteis, average bit error rates (BERs), and outage probabilities for both multiplexing and diversity cases.
Numerical results show that MuReC based multi-band transmission schemes can effectively reduce the impact of diverse medium fading and enhance the performance.
\end{abstract}

\begin{IEEEkeywords}
Magnetic induction (MI), diverse medium fading, performance analysis, frequency division multiplexing, frequency diversity.
\end{IEEEkeywords}

\section{introduction}
\IEEEPARstart{D}{ue} to the advantages in special mediums such as soil, concrete, and metals, magnetic induction (MI) communications have gained substantial attention in recent years and are considered to support novel and important applications in future underground emergency rescues~\cite{Guanghua_Liu_Magazine,Xin_Tan_APM}. Wireless channels in MI communications can be classified into the homogeneous-medium channel, where the propagation path of the magnetic field from the transmit antenna to the receive antenna contains only one medium, and inhomogeneous-medium channels, where the propagation path of the magnetic field from the transmit antenna to the receive antenna contains more than one medium.

Many works related to MI communications have been done based on homogeneous-medium channels and demonstrated performance gains.
%For examples, it has been shown that MI based multiple-input multiple-output (MI-MIMO) system can achieve higher rate than single-coil MI system~\cite{Yuzhou_Li_Survey}. Nevertheless, strong crosstalk effect in MI-MIMO significantly limits the system capacity.
For examples, various coil antenna designs, such as circular loop antennas with proper partial overlapping~\cite{Access_8} and heterogeneous multipole loop antenna array~\cite{Kim_2016}, have been studied with air as the medium to mitigate crosstalks in MI based multiple-input multiple-output (MI-MIMO).
%Also, by leveraging metamaterials to amplify the magnetic field around MI transceivers, metamaterial enhanced magnetic induction ($\text{M}^2\text{I}$) technique can significantly extend the communication distance.
The channel modeling~\cite{Hongzhi_Guo_TAP}, practical design~\cite{Practical_M2I, Practical_M2I_OPtimal}, and full-duplex (FD) relaying design~\cite{Hongzhi_Guo_INFOCOM} for metamaterial enhanced magnetic induction ($\text{M}^2\text{I}$) system have been investigated based on single lossy medium.
%Works related to wireless power transfer (WPT), most of which use air as a medium, can be referenced to improve the performance for MI communications.
\begin{figure*}
\centering
\includegraphics[width=0.95\textwidth]{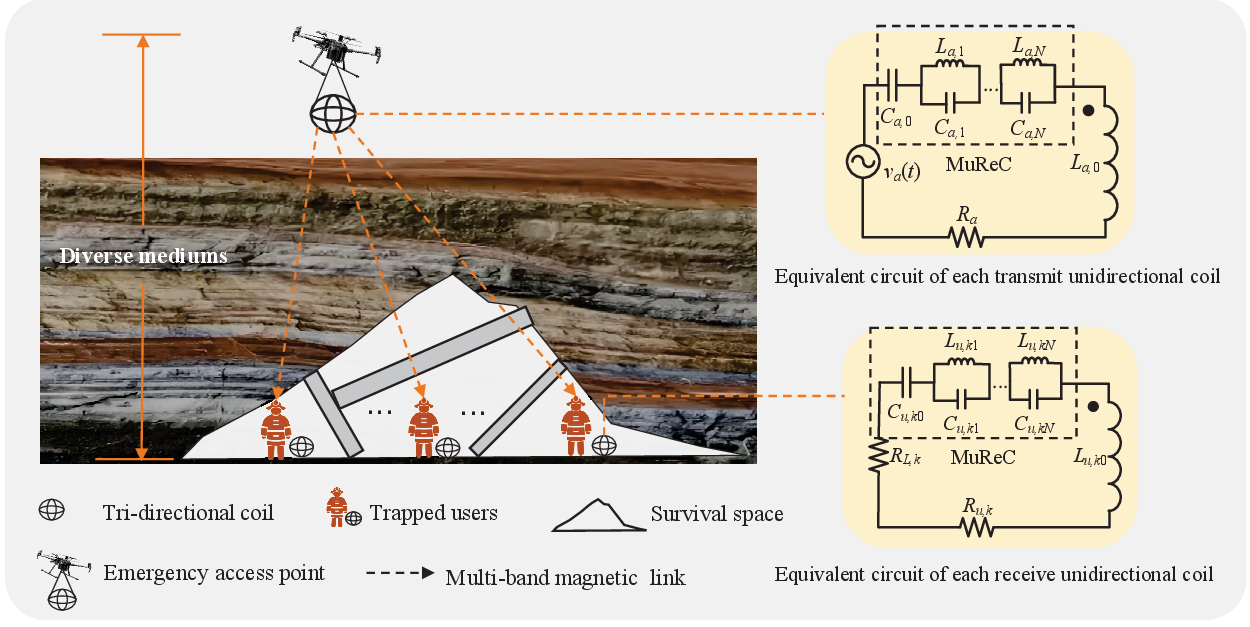}
\caption{Magnetic Induction based multi-band emergency communication with diverse mediums.} \label{fig:Scenario}
\end{figure*}
In addition, the authors of~\cite{Gang_Yang_TSP} proposed a magnetic beamforming scheme for MI-MIMO system, which uses air as a medium, and introduced a low-complexity channel estimation method based on least squares.
%Magnetic beamforming schemes are studied with coil arrays in~\cite{Gang_Yang_TSP} and passively via metasurfaces~\cite{Zhangyu_Li_GLOBECOM,Metasurface_WPT}.
%However, metasurface-assisted MI transmissions face constraints due to the limited number of reflective elements~\cite{Access_Steven_Kisseleff}.
To address magnetic channel estimation errors, robust beamforming technique has been explored in~\cite{Yixin_Zhang_TSP} and~\cite{Robust_Beamforming_2}. To address the scheduling challenge in multi-node underground transmission, the authors of~\cite{Wireless_Charge_Steven_Kisseleff} proposed a combined beamforming and scheduling strategy, where the underground medium is soil and assumed to be homogeneous.
%For simultaneous multi-frequency transmission, studies such as~\cite{Yuzhou_Li_Survey,MuReC_2,MuReC_3} suggest integrating a multi-frequency resonating compensation (MuReC) circuit into a single coil, generating multiple resonant frequencies without additional coils.
Our previous studies demonstrated that integrating magnetic beamforming with multi-frequency resonating compensation (MuReC)~\cite{JSAC_2022}, orthogonal frequency division multiplexing (OFDM)~\cite{ICCC_2021}, and backscatter based FD~\cite{Backscatter_FD} can substantially enhance the achievable data rate and energy efficiency for MI communications in homogeneous-medium channels.
However, works mentioned above do not consider or evaluate the impact of multiple propagation mediums on performances. In practical post-disaster scenarios, the propagation mediums of MI based emergency communications are usually inhomogeneous and contain more than one medium.

Considering the inhomogeneity of surrounding environments, the authors of~\cite{Joint_Channel_Antenna_Modeling} developed an exact full-wave model and a near field approximated model for the MI channel. In~\cite{Hongzhi_Guo_Metal}, MI is used to provide a reliable and flexible solution for wireless sensor and robotic networks in metal-constrained environments, where the propagation channel contains one metal layer and two other layers. Taking into account the impact of the boundary between water and air, the modulation scheme and channel capacity are studied in~\cite{Border_Modulation} and~\cite{Border_Capacity}, respectively.
%However, above works only consider MI channels composed of up to three mediums.
Based on skin depths of multiple mediums, the authors of~\cite{Digital_MI_Steven} gave the expression for the inhomogeneous underground MI channel.
Although numerous studies have explored the effects of multiple propagation mediums on MI communications, existing research typically assumes predetermined and known medium dimensions. However, in practical underground emergency rescues, the propagation mediums are not only diverse but also randomly composed due to disasters, which presents a challenge for the practical implementation of MI communication systems in emergency rescues.

To the best of our knowledge, the impact of diverse and randomly composed mediums on the performances of MI communications is currently unknown and how to deal with the fading caused by diverse and irregular mediums was not discussed. In this paper we propose a statistical channel model for MI based underground emergency communications, that accounts for the random composition of diverse mediums caused by disasters. The proposed model is demonstrated to follow a lognormal distribution. To address performance degradation caused by such fading, we employ MuReC based tri-directional (TD) coils to achieve simultaneous transmission across multiple bands. Since simultaneous deep fading across all bands occurs with low probability, the proposed multi-band approach demonstrates significant advantages over conventional single-band MI systems, which results in improved system performance metrics, including higher ergodic capacity, lower average bit error rate (BER), and reduced network outage probability. We present comprehensive theoretical analysis for MuReC based multi-band MI system under diverse medium fading and derive the expressions for signal-to-noise ratio (SNR) probability density functions, ergodic channel capacities, average BERs, and network outage probabilities for both frequency multiplexing and frequency diversity cases.

The remainder of this paper is structured as follows. In Section~II, we formulate the mathematical model for MuReC based multi-band underground emergency communication network. In Section~III, we formulate a statistical diverse medium fading channel model, which reflects the random composition of diverse transmission mediums and is shown to follow a lognormal distribution. In Section~IV, theoretical analysis for MuReC based multi-band system under diverse medium fading is presented and the expressions for SNR probability density functions, ergodic channel capacities, average BERs, and network outage probabilities for both frequency multiplexing and frequency diversity cases are derived. We conduct numerical evaluations in Section~V and conclude in Section~VI.
\section{System Model}
In this section, the mathematical model for MuReC based multi-band underground emergency communication network is formulated.

We consider an MI based emergency communication as shown Fig.~\ref{fig:Scenario}, where an emergency access point (EAP), such as a drone, transmits information to $K+1$ underground trapped users (TUs) with $N+1$ bands, denoted by $\mathcal{B}=\left\{\mathcal{B}_{0},...,\mathcal{B}_{n},...,\mathcal{B}_{N}\right\}$. We denote by $\mathcal{F}=\{f_{0},...,f_{n},...,f_{N}\}$ the center frequencies of the $N+1$ bands, where the center frequency of $\mathcal{B}_{n}$ is $f_{n}$. Then, $\mathcal{B}_{n}$ can be expressed as $\mathcal{B}_{n}=\{f|f_{n}-B_{n}/2\leq f\leq f_{n}+B_{n}/2\}$, where $f$ denotes the frequency and $B_{n}$ is the bandwidth of $\mathcal{B}_{n}$.
TUs access the network in time division multiple access (TDMA) manner, such that all the bands can be used for the transmission corresponding to each user to mitigate the impact of diverse medium fading. The frame duration, which is normalized to 1, is equally divided into $K+1$ time-slots and the $k$-th time-slot is assigned to the $k$-th TU. We denote by $\mathcal T_k=\{t|k/(K+1)\leq t\leq (k+1)/(K+1)\}$ the $k$-th time-slot, where $t$ denotes the time ($k\in\{0,1,...,K\}$). The underground mediums are inhomogeneous and consist of diverse subtances. To guarantee the omnidirectional radiation and reception of MI signals, the EAP and TUs are equipped with tri-directional (TD) coils, which consist of three mutually perpendicular unidirectional coils~\cite{TD_Coil_TVT,TD_Coil_ICC}.

Based on our previous work in~\cite{JSAC_2022}, we employ MuReC circuit for the unidirectional coil to make it resonate at multiple frequencies. With MuReC, both the TD coils of the EAP and TUs can simultaneously resonate at $\mathcal{F}$, such that the receive power corresponding to each band can be maximized.
The equivalent circuit corresponding to each unidirectional coil of the EAP is shown in the upper right corner of Fig.~\ref{fig:Scenario}, where $v_a(t)$, $L_{a,0}$, and $R_a$ are time-domain input voltage signal, coil self-inductance and the coil self-resistance, respectively. Similarly, the equivalent circuit corresponding to each unidirectional coil of the $k$-th TU is shown in the lower right corner of Fig.~\ref{fig:Scenario}, where $L_{u,k0}$, $R_{u,k}$, and $R_{L,k}$ are coil self-inductance, coil self-resistance, and load resistance, respectively. The parameter values of MuReC circuits, that is, $\{C_{a,0}$, ..., $C_{a,N}\}$, $\{L_{a,1}$, ..., $L_{a,N}\}$, $\{C_{u,k0}$, ..., $C_{u,kN}\}$, and $\{L_{u,k1}$, ..., $L_{u,kN}\}$ can be configured according to~\cite{JSAC_2022} to make the unidirectional coil of the EAP and TUs simultaneously resonate at $\mathcal{F}$.
Fig.~\ref{fig:MuReC_Coil} shows an example of the impendence characteristic of the MuReC coil, which has five resonant frequencies and there is a pole between two adjacent resonant frequencies.
Signals near resonant frequencies in MuReC coil are with low impendence loss. Also, due to the poles between two adjacent resonant frequencies, the multi-band signals in the MuReC coil can be separated~\cite{JSAC_2022}.
\begin{figure}
\centering
\includegraphics[width=0.5\textwidth]{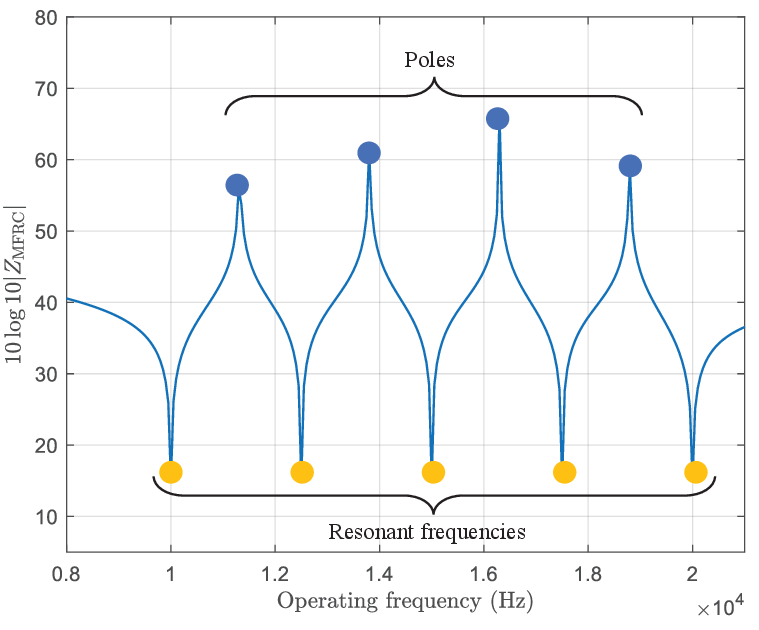}
\caption{Impendence characteristic of MuReC coil.} \label{fig:MuReC_Coil}
\end{figure}

We denote by $Z_a(f)$ and $Z_{u,k}(f)$ the impendence of each unidirectional coil of the EAP and the $k$-th TU, respectively, which are given by $Z_a(f)\!=\!R_a\!+\!j\left[\!\frac{4\pi^2f^2 L_{a,0} C_{a,0}-1}{2\pi f C_{a,0}}\!+\!\sum_{n=1}^N \frac{2\pi f L_{a,n}}{1-4\pi^2f^2 L_{a,n} C_{a,n}}\!\right]$
and $Z_{u,k}(f)\!=\!R_{u,k}\!\!+\!\!R_{L,k}\!\!+\!\!j\left[\frac{4\pi^2f^2 L_{u,k0} C_{u,k0}\!-\!1}{2\pi f C_{u,k0}}\!+\!\sum_{n\!=\!1}^{N} \frac{2\pi f L_{u,kn}}{1-4\pi^2f^2 L_{u,kn} C_{u,kn}}\right]$, respectively.
The impendence matrix of the EAP is denoted by $\mathbf{Z}_{a}(f)$, where diagonal and off-diagonal entries are the coil self-resistances and the mutual inductances among the coils of EAP, respectively. Since the three unidirectional coils of the TD coil are perpendicular to each other, their mutual inductances are zero. Thus, the expression of $\mathbf{Z}_{a}(f)$ can be given by $\mathbf{Z}_{a}(f)=Z_a(f)\mathbf{E}$, where $\mathbf{E}$ is the identity matrix. Similarly, we denote by $\mathbf{Z}_{u,k}(f)$ the impendence matrix of the $k$-th TU, which can be given by $\mathbf{Z}_{u,k}(f)=Z_{u,k}(f)\mathbf{E}$.
The mutual inductance matrix between the EAP and the $k$-th TU at frequency $f$, denoted by $\mathbf{M}_{k}(f)$, can be written as follows:
\begin{equation}\label{eq:M_k}
\mathbf{M}_{k}(f)=\left[\begin{array}{cccc}
M_{k,11}(f) & M_{k,12}(f) &  M_{k, 13}(f) \\
M_{k,21}(f) & M_{k,22}(f) &  M_{k, 23}(f) \\
M_{k,31}(f) & M_{k,32}(f) &  M_{k, 33}(f)
\end{array}\right],
\end{equation}
where $M_{k,pq}(f)$ is the mutual inductance between the $p$-th unidirectional coil of EAP and the $q$-th unidirectional coil of the $k$-th TU ($p,q\in\{1,2,3\}$).
We denote the mutual inductance matrix between the $v$-th TU and the $w$-th TU at frequency $f$ by $\mathbf{\tilde M}_{vw}(f)$, which can be given by
\begin{equation}\label{eq:M_k_vw}
\mathbf{\tilde M}_{vw}(f)=\left[\begin{array}{cccc}
\tilde M_{vw,11}(f) & \tilde M_{vw,12}(f) &  \tilde M_{vw, 13}(f) \\
\tilde M_{vw,21}(f) & \tilde M_{vw,22}(f) &  \tilde M_{vw, 23}(f) \\
\tilde M_{vw,31}(f) & \tilde M_{vw,32}(f) &  \tilde M_{vw, 33}(f)
\end{array}\right],
\end{equation}
where $\tilde M_{vw,pq}(f)$ is the mutual inductance between the $p$-th coil of the $v$-th TU and the $q$-th coil of the $w$-th TU.

Let $x_{k,n}(t)$ denote the time-limited transmit signal corresponding to the $k$-th time-slot and the band $\mathcal{B}_{n}$, whose Fourier transform is $X_{k,n}(f)$. The power of $x_{k,n}(t)$ on a unit resistance, denoted by $P_{k,n}$, can be given by $P_{k,n}=(K+1)\int_{\mathcal T_k}|x_{k,n}(t)|^2dt$.
Most energy of $X_{k,n}(f)$ is concentrated within $\mathcal{B}_{n}$ and satisfies $\int_{\mathcal{B}_{n}}|X_{k,n}(f)|^2df\!\approx\!\int_{\mathcal T_k}|x_{k,n}(t)|^2dt\!=\!P_{k,n}/(K\!+\!1)$.
Then, $x_{k,n}(t)$ is mapped to the coil input voltage of the EAP, which results in the expression of $v_a(t)$ as $v_a(t)=\sum_{k=0}^{K}\sum_{n=0}^{N}x_{k,n}(t)$,
%\begin{equation}\label{eq:V_t}
%v_a(t)=\sum_{k=0}^{K}\sum_{n=0}^{N}x_{k,n}(t),
%\end{equation}
whose Fourier transform is
\begin{equation}\label{eq:V_tf}
V_a(f)=\sum_{k=0}^{K}\sum_{n=0}^{N}X_{k,n}(f).
\end{equation}
The EAP loads the same transmit signal on three unidirectional coils, such that the omnidirectional transmission can be guaranteed~\cite{TD_Coil_TVT,TD_Coil_ICC}. Based on~\eqref{eq:V_tf}, we can write the voltage vector of EAP in the frequency-domain as $\mathbf{V}_a(f)
\!=\![V_a(f),V_a(f),V_a(f)]^T$.
We denote by $\mathbf{I}_{a}(f)\!=\!\left[{I}_{a,1}(f),{I}_{a,2}(f),{I}_{a,3}(f)\right]^T$ and $\mathbf{I}_{k}(f)\!=\!\left[{I}_{k,1}(f),{I}_{k,2}(f),{I}_{k,3}(f)\right]^T$ the current vectors of the EAP and the $k$-th TU, respectively.
According to Faraday's electromagnetic induction law and Kirchhoff's voltage law, we have
\begin{equation}\label{eq:Ikf}
\mathbf{I}_{k}(f)\!\!=\!\!\mathbf{Z}_{u,k}(f)^{\!-\!1}\!\Big[\!j2\pi f\mathbf{M}_{k}(f)\mathbf{I}_{a}(f)\!-\!\!\!\sum_{v\ne k}^{K}\!j2\pi f\mathbf{\tilde M}_{vk}(f)\mathbf{I}_{v}(f)\!\Big]
\end{equation}
and
\begin{equation}\label{eq:Vkf}
\mathbf{V}_a(f)=\mathbf{Z}_a(f)\mathbf{I}_a(f)-\sum_{k=0}^{K}j2\pi f\mathbf{M}_k(f)^T\mathbf{I}_k(f).
\end{equation}
In~\eqref{eq:Ikf}, $\sum_{v\ne k}^{K}\!j2\pi f\mathbf{\tilde M}_{vk}(f)\mathbf{I}_{v}(f)$ is the secondary induced voltage generated by the receive currents of other TUs, which is much smaller than the primary induced voltage $j2\pi f\mathbf{M}_{k}(f)\mathbf{I}_{a}(f)$ and can be ignored~\cite{Gang_Yang_TSP,Zhangyu_Li_GLOBECOM,Yixin_Zhang_TSP}.
%Since post-disaster underground survival spaces are usually enclosed and narrow, the size of TUs' TD coils are assumed to be much smaller than the size of EAP's TD coil.
%In this case, the term $\sum_{v\ne k}^{K}\!j2\pi f\mathbf{\tilde M}_{vk}(f)\mathbf{I}_{v}(f)$ is very small and can be ignored~\cite{Gang_Yang_TSP,Zhangyu_Li_GLOBECOM,Yixin_Zhang_TSP}.
Therefore, \eqref{eq:Ikf} can be re-written as follows:
\begin{equation}\label{eq:Ikf-rw}
\mathbf{I}_{k}(f)=\frac{j2\pi f\mathbf{M}_{k}(f)\mathbf{I}_{a}(f)}{Z_{u,k}(f)}.
\end{equation}
Plugging~\eqref{eq:Ikf-rw} into \eqref{eq:Vkf}, we have
\begin{equation}\label{eq:Vkf-rw}
\begin{split}
\mathbf{V}_a(f)&=\left[\mathbf{Z}_a(f)+\sum_{k=0}^{K}\frac{4\pi^2 f^2\mathbf{M}_{k}(f)^T\mathbf{M}_k(f)}{Z_{u,k}(f)}\right]\mathbf{I}_{a}(f)
\\& \mathop\approx^{\text{(a)}}\mathbf{Z}_a(f)\mathbf{I}_{a}(f),
\end{split}
\end{equation}
where (a) is because the transmission is long, such that $\mathbf{Z}_a(f)\gg\sum_{k=0}^{K}\frac{4\pi^2 f^2\mathbf{M}_{k}(f)^T\mathbf{M}_k(f)}{Z_{u,k}(f)}$~\cite{TD_Coil_TVT,Hongzhi_Guo_TAP,Active_Relaying}. Combining \eqref{eq:Vkf-rw} and \eqref{eq:Ikf-rw}, the receive signals of the $k$-th TU, which are the voltages of load resistors, denoted by $\mathbf{Y}_{k}(f)$, can be given by
\begin{equation}\label{eq:ykf}
\mathbf{Y}_{k}(f)=\frac{j2\pi f\mathbf{M}_{k}(f)\mathbf{V}_{a}(f)R_{L,k}}{Z_{u,k}(f)Z_{a}(f)}+\mathbf{W}_{k}(f),
\end{equation}
where $\mathbf{W}_{k}(f)$ is the noise vector corresponding to the $k$-th TU. To guarantee the omnidirectional reception of MI signals, equal gain combining is used for the TD coil of the $k$-th TU~\cite{TD_Coil_TVT,TD_Coil_ICC}.
Then, we can write the receive signal of the $k$-th TU after combining, denoted by $Y_{k}(f)$, as follows:
\begin{equation}\label{eq:ykfac}
Y_{k}(f)=H_k(f)V_a(f)+W_{k}(f),
\end{equation}
where
\begin{equation}\label{eq:hkf}
H_k(f)=\frac{j2\pi f \mathbf{o}\mathbf{M}_{k}(f)\mathbf{o}^TR_{L,k}}{Z_{u,k}(f)Z_{a}(f)},
\end{equation}
${W}_{k}(f)\!=\!\mathbf{o}\mathbf{W}_{k}(f)$, and $\mathbf{o}=[1,1,1]/\sqrt{3}$. We assume that $\mathcal{B}_{n}$ is narrow such that $H_k(f)\approx H_k(f_{n})$ for $ f\in \mathcal{B}_{n}$~\cite{Yixin_Zhang_TSP,JSAC_2022,Joint_Channel_Antenna_Modeling}.
Then, the component of \eqref{eq:ykfac} corresponding to $\mathcal{B}_{n}$, denoted by
$Y_{k,n}(f)$, can be obtained with band filtering for $\mathcal B_n$ as follows:
\begin{equation}\label{eq:ykfac_n_rw}
\begin{split}
Y_{k,n}(f)=H_k(f_{n})\sum_{v=0}^{K}\!X_{v,n}(f)\!+\!W_{k}(f),f\in\mathcal{B}_{n}.
\end{split}
\end{equation}
Let $y_{k,n}(t)$ denote the inverse Fourier transform for $Y_{k,n}(f)$. Then, in the $k$-th time-slot $\mathcal{T}_k$, $y_{k,n}(t)$ can be given by
\begin{equation}\label{eq:ykfac_n}
\begin{split}
y_{k,n}(t)=H_k(f_{n})x_{k,n}(t)+w_{k,n}(t),t\in\mathcal{T}_{k},
\end{split}
\end{equation}
where $w_{k,n}(t)$ is the noise signal corresponding to the $n$-th band of the $k$-th TU. For any time $t$, we assume that $w_{k,n}(t)$ is white and follows $\mathcal{CN}(0,N_0B_n)$, where $N_0$ is the noise power spectral density.

\textit{Remarks on (\ref{eq:ykfac_n}):} Due to MuReC circuits, each unidirectional coil of the EAP and TUs resonates at $f_n$. Thus, the term $H_k(f_n)$ in \eqref{eq:ykfac_n} equals to $\frac{j2\pi f \mathbf{o}\mathbf{M}_{k}(f)\mathbf{o}^TR_{L,k}}{(R_{u,k}+R_{L,k})R_{a}}$, whose module is the maximum value of $|H_k(f)|$. Therefore, as compared with conventional coils without MuReC, the MuReC based TD coils in this paper can guarantee efficient radiation and reception of MI signals in multiple bands.

Based on \eqref{eq:Vkf-rw}, we can derive the transmit power of the EAP, denoted by $P_t$, as follows:
\begin{equation}\label{eq:P_tn}
\begin{split}
P_{t}&\!=\!\int_{\mathcal{B}}\frac{1}{2}\mathbb{R}\left({\mathbf{ V}_{a}(f)^H\mathbf{ I}_{a}(f)}\right)df\
\\&\!=\!\int_{\mathcal{B}}\mathbb{R}\left(\frac{\mathbf{ V}_{a}(f)^H\mathbf{ V}_{a}(f)}{2Z_a(f)}\right)df
\\&\!=\!\sum_{k=0}^{K}\sum_{n=0}^{N}\!\int_{\mathcal{B}_{n}}\!\!\!\!\frac{3|X_{k,n}(f)|^2R_a}{2|Z_a(f)|^2}df\!\mathop\leq^{\text{(a)}}\!\frac{3}{2R_a(K\!+\!1)}\!\sum_{k=0}^{K}\!\sum_{n=0}^{N}P_{k,n},
\end{split}
\end{equation}
where (a) results from $R_a^2\leq|Z_a(f)|^2$.
\section{Diverse Medium Fading Model}
In this section, we formulate a statistical diverse medium fading channel model, which reflects the random composition of diverse transmission mediums and is shown to follow a lognormal distribution.

%We investigate the diverse medium fading of MI channel based on an assumption that the transmission mediums do ont contain magnetite~\cite{Zhi_Sun_Underground,Digital_MI_Steven,Zhi_Sun_2013}. This is reasonable in many scenarios since most underground mediums do not contain magnetite and their relative permeabilities are very close to 1. With this assumption,
In most underground scenarios where the transmission mediums do not contain magnetite, $M_{k,pq}(f)$ can be expressed as follows~\cite{Zhi_Sun_Underground,Digital_MI_Steven,Zhi_Sun_2013}:
\begin{equation}\label{eq:M_k_eq}
M_{k,pq}(f)=\mu \pi N_tN_r \frac{a_t^2a_r^2}{4 d_k^3}J_{k,pq}G_k(f),
\end{equation}
where $\mu$ is the permeability, $a_t$ is the coil radius of EAP, $a_r$ is the coil radius of TU, $N_t$ is the number of turns of EAP's coils,  $N_r$ is the number of turns of TUs' coils, $d_k$ is the distance between the EAP and the $k$-th TU, $J_{k,pq}$ is the angular misalignment factor between the $p$-th unidirectional coil of the EAP and the $q$-th unidirectional coil of the $k$-th TU~\cite{Steven_Interference_Polarization}, and $G_k(f)$ is the additional loss factor due to the eddy current. Substituting \eqref{eq:M_k_eq} into \eqref{eq:M_k}, we have
\begin{equation}\label{eq:M_k_rw}
\mathbf{M}_{k}(f)=\mathbf{\bar M}_kG_k(f),
\end{equation}
where
\begin{equation}\label{eq:bar_M_k_rw}
\mathbf{\bar M}_{k}=\left[\begin{array}{cccc}
\bar M_{k,11} & \bar M_{k,12} &  \bar M_{k, 13} \\
\bar M_{k,21}& \bar M_{k,22} &  \bar M_{k, 23}\\
\bar M_{k,31} & \bar M_{k,32} &  \bar M_{k, 33}
\end{array}\right],
\end{equation}
with $\bar{M}_{k,pq}=\mu \pi N_tN_r a_t^2a_r^2J_{k,pq}/{4 d_k^3}$. In fact, the orientation of the coil, which is denoted by $J_{k,pq}$ in \eqref{eq:M_k_eq}, can also cause fading for the MI channel. This kind of fading has been thoroughly studied in~\cite{TD_Coil_TVT,TD_Coil_ICC,Active_Relaying} and can be significantly reduced with TD coils. Thus, in this paper we mainly consider the fading caused by diverse mediums, which will be discussed in the following.
Suppose there are $W_k$ kinds of mediums on the transmission path from the EAP to the $k$-th TU.
The skin depth and the length along the transmission path of the $i$-th medium are denoted by $\delta_{k,i}(f)$ and $\Delta r_{k,i}$, respectively.
%The permittivity, conductivity, and the length along the transmission path of the $i$-th medium are denoted by $\epsilon_{k,i}$, $\sigma_{k,i}$, and $\Delta r_{k,i}$, respectively.
%We can calculate the skin depth of the $i$-th medium at frequency $f$ as follows~\cite{Digital_MI_Steven,Zhi_Sun_2013}:
%\begin{equation}\label{eq:delta_ki}
%{\delta_{k,i}}(f)=\frac{1}{2 \pi f \sqrt{\frac{\mu \epsilon_i'}{2}\left(\sqrt{1+\left(\frac{\epsilon_i''}{\epsilon_i'}\right)^2 }-1\right)}}.
%\end{equation}
In practical post-disaster underground scenarios, $\Delta r_{k,i}$ is irregular. We model $\Delta r_{k,i}$ as a random variable with a mean of $E_{k,i}$ and a variance of $D_{k,i}$. Then, we have the following lemma for $G_k(f)$.
\begin{lemma}
If $W_k\to +\infty$, $G_k(f)$ follows a lognormal distribution as follows:
\begin{equation}\label{eq:G_kf_distribution}
\log_e\left[G_k(f)\right]\sim\mathcal N\left(\bar{E}_k(f),\bar{D}_k(f)\right),
\end{equation}
where
\begin{equation}\label{eq:bar_E}
\bar{E}_k(f)=\sum_i^{W_k} \frac{-E_{k,i}}{\delta_{k,i}(f)},
\end{equation}
and
\begin{equation}\label{eq:bar_D}
\bar{D}_k(f)=\sum_i^{W_k} \frac{D_{k,i}}{\delta_{k,i}^2(f)}.
\end{equation}
\end{lemma}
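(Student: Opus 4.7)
The plan is to derive the lognormal law for $G_k(f)$ by viewing the eddy-current loss as a cascade of independent per-medium attenuations and then invoking the central limit theorem on the logarithm. First I would recall the skin-depth based attenuation model that the paper draws from references~\cite{Zhi_Sun_Underground,Digital_MI_Steven,Zhi_Sun_2013}: when the magnetic field traverses a segment of medium $i$ of length $\Delta r_{k,i}$ with skin depth $\delta_{k,i}(f)$, the field is attenuated by the exponential factor $\exp\bigl(-\Delta r_{k,i}/\delta_{k,i}(f)\bigr)$. Because the $W_k$ segments along the transmission path from the EAP to the $k$-th TU are traversed in series, the total eddy-current loss factor multiplies, giving
\begin{equation*}
G_k(f)=\prod_{i=1}^{W_k}\exp\!\left(-\frac{\Delta r_{k,i}}{\delta_{k,i}(f)}\right).
\end{equation*}

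Next I would take the natural logarithm to linearize the product,
\begin{equation*}
\log_e G_k(f)=\sum_{i=1}^{W_k}\frac{-\Delta r_{k,i}}{\delta_{k,i}(f)},
\end{equation*}
so that $\log_e G_k(f)$ becomes a sum of independent random variables (the irregular segment lengths $\Delta r_{k,i}$ are assumed mutually independent since they describe disjoint randomly composed portions of the post-disaster propagation path). Using the hypotheses $\mathbb{E}[\Delta r_{k,i}]=E_{k,i}$ and $\mathrm{Var}[\Delta r_{k,i}]=D_{k,i}$ together with linearity of expectation and the independence-based additivity of variance, the summand indexed by $i$ has mean $-E_{k,i}/\delta_{k,i}(f)$ and variance $D_{k,i}/\delta_{k,i}^{2}(f)$, which immediately gives the candidate parameters $\bar{E}_k(f)$ in~\eqref{eq:bar_E} and $\bar{D}_k(f)$ in~\eqref{eq:bar_D}.

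To promote this first-moment identification to a full distributional statement I would apply the central limit theorem in the $W_k\to+\infty$ regime. Since the summands are independent but not identically distributed (different skin depths and possibly different second moments across medium types), the clean tool is the Lindeberg--Feller CLT: subject to the mild Lindeberg condition on the rescaled summands $-\Delta r_{k,i}/\delta_{k,i}(f)$, the normalized sum converges in distribution to $\mathcal{N}(0,1)$, and therefore $\log_e G_k(f)$ converges to $\mathcal{N}\bigl(\bar{E}_k(f),\bar{D}_k(f)\bigr)$. Exponentiating gives the claimed lognormality of $G_k(f)$ as in~\eqref{eq:G_kf_distribution}.

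The main obstacle I anticipate is not the CLT itself but two modeling justifications that precede it. The first is the multiplicative decomposition $G_k(f)=\prod_i\exp(-\Delta r_{k,i}/\delta_{k,i}(f))$: strictly speaking, the skin-depth formula is derived for plane-wave propagation in a single medium, and piecing segments together ignores refraction and reflection at boundaries. I would argue (as is standard in the cited MI literature) that for near-field, quasi-static MI signals these interface effects are negligible compared with the bulk absorption, so the cascaded exponential form is an acceptable approximation. The second delicate point is that the Lindeberg condition requires that no single segment dominates; for physically reasonable disaster scenarios, where no single $D_{k,i}/\delta_{k,i}^{2}(f)$ is of the order of the total variance $\bar{D}_k(f)$, this holds and the lognormal limit follows.
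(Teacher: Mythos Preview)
Your proposal is correct and follows essentially the same route as the paper: write $G_k(f)=\prod_i\exp(-\Delta r_{k,i}/\delta_{k,i}(f))$, take logarithms to obtain a sum of independent terms with the stated means and variances, and invoke the central limit theorem in the limit $W_k\to\infty$. The only difference is that you are more careful than the paper, which simply appeals to ``the central limit theorem'' without naming the Lindeberg--Feller version or the no-dominant-term condition; your added discussion of the multiplicative decomposition and the Lindeberg condition is sound but goes beyond what the paper's proof actually contains.
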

\begin{proof}
According to~\cite{Digital_MI_Steven}, the expression of $G_k(f)$ can be given by
\begin{equation}\label{eq:G_kf}
G_k(f)=\prod_i^{W_k} \exp\left[{-\frac{\Delta r_{k,i}}{\delta_{k,i}(f)}}\right]=\exp\left[{-\sum_i^{W_k} \frac{\Delta r_{k,i}}{\delta_{k,i}(f)}}\right].
\end{equation}
In accordance with the central limit theorem, no matter what distribution $\Delta r_{k,i}$ follows, if $W_k\to+\infty$, ${-\sum_i^{W_k} \frac{\Delta r_{k,i}}{\delta_{k,i}(f)}}$ follows $\mathcal N\left(\bar{E}_k(f),\bar{D}_k(f)\right)$. Therefore, $G_k(f)$ follows a lognormal distribution as~\eqref{eq:G_kf_distribution}. Lemma 1 follows.
\end{proof}

\textit{Remarks on Lemma 1:} Lemma 1 reveals that if underground mediums are very diverse and irregular, the eddy current attenuation of the magnetic field follows a lognormal distribution, which lays the foundation for formulating the statistical model of the MI channel in the underground buried environment after disasters. We define $\widetilde H_k(f)\!=\!\log_e\!\!\left[\left|\frac{j2\pi fR_{L,k}\mathbf{o}\mathbf{\bar M}_{k}\mathbf{o}^T}{Z_{u,k}(f)Z_{a}(f)}\right|\right]$, $\theta_k(f)\!=\!\arg\!\left[\frac{j2\pi fR_{L,k}\mathbf{o}\mathbf{\bar M}_{k}\mathbf{o}^T}{Z_{u,k}(f)Z_{a}(f)}\right]$, and $\bar H_k(f)\!=\!\log_e\!\!\left[\frac{4\pi^2 fR_{L,k}^2(\mathbf{o}\mathbf{\bar M}_{k}\mathbf{o}^T)^2}{|Z_{u,k}(f)|^2|Z_{a}(f)|^2}\right]$.
Based on Lemma~1, we have the following Theorem for $H_k(f)$ and $|H_k(f)|^2$.
\begin{theorem}
The module of $H_k(f)$ follows a lognormal distribution as
\begin{equation}\label{eq:H_k_module_distribution}
\log_e\left[ |H_k(f)|\right]\sim\mathcal N\left(\widetilde H_k(f)+\bar{E}_k(f),\bar{D}_k(f)\right),
\end{equation}
the phase of $H_k(f)$ satisfies
\begin{equation}\label{eq:H_k_phase_distribution}
\arg[H_k(f)]=\theta_k(f),
\end{equation}
and $|H_k(f)|^2$ follows a lognormal distribution as
\begin{equation}\label{eq:H_k_2_distribution}
\log_e\left[ |H_k(f)|^2\right]\sim\mathcal N\left(\bar H_k(f)+2\bar{E}_k(f),4\bar{D}_k(f)\right).
\end{equation}
\end{theorem}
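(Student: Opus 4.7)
My plan is to reduce all three claims to direct consequences of Lemma~1 by exploiting the multiplicative structure of the channel. Substituting \eqref{eq:M_k_rw} into \eqref{eq:hkf}, the channel factors as $H_k(f) = A_k(f)\, G_k(f)$, where $A_k(f) = \frac{j2\pi f R_{L,k}\,\mathbf{o}\bar{\mathbf{M}}_k\mathbf{o}^T}{Z_{u,k}(f) Z_a(f)}$ is a deterministic complex scalar depending only on geometry and impedances, and $G_k(f) > 0$ is the real-valued lognormal random variable analyzed in Lemma~1. This separation is the key observation, since it isolates all randomness into a single real positive factor.

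With this factorization in hand, each claim follows by direct inspection. For the module, $|H_k(f)| = |A_k(f)|\,G_k(f)$, so taking the logarithm gives $\log_e|H_k(f)| = \widetilde{H}_k(f) + \log_e G_k(f)$; Lemma~1 then supplies the normal distribution of the second term, and adding the deterministic constant $\widetilde{H}_k(f)$ merely shifts the mean, which yields \eqref{eq:H_k_module_distribution}. For the phase, the strict positivity of $G_k(f)$ (evident from the exponential form \eqref{eq:G_kf}) means it scales without rotating, so $\arg[H_k(f)] = \arg[A_k(f)] = \theta_k(f)$, which is \eqref{eq:H_k_phase_distribution}. For $|H_k(f)|^2$, I would write $\log_e|H_k(f)|^2 = 2\log_e|A_k(f)| + 2\log_e G_k(f) = \bar{H}_k(f) + 2\log_e G_k(f)$; since scaling a Gaussian by $2$ doubles its mean and quadruples its variance, \eqref{eq:H_k_2_distribution} follows immediately.

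There is no real obstacle here — the statement is essentially the elementary fact that a lognormal random variable scaled by a deterministic positive constant remains lognormal with a shifted log-mean. The only point requiring any care is keeping track of what is deterministic (the coil geometry, the impedances $Z_a(f)$ and $Z_{u,k}(f)$, and hence the prefactor $A_k(f)$) versus what is stochastic (only $G_k(f)$, through the random path lengths $\Delta r_{k,i}$), together with verifying that $G_k(f) > 0$ so that it affects neither the absolute value nor the argument of $H_k(f)$. Once these observations are in place, the proof is a short chain of algebraic substitutions invoking Lemma~1 at the final step.
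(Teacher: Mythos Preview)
Your proposal is correct and follows essentially the same approach as the paper: factor $H_k(f)$ via \eqref{eq:hkf} and \eqref{eq:M_k_rw} into a deterministic complex prefactor times the positive lognormal variable $G_k(f)$, then read off the phase from the prefactor and the log-module distribution from Lemma~1 plus a constant shift. The paper's proof is slightly terser (it does not spell out the positivity of $G_k(f)$ or the ``double the mean, quadruple the variance'' step), but the logical structure is identical.
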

\begin{proof}
Based on~\eqref{eq:hkf} and \eqref{eq:M_k_rw}, we can obtain
\begin{equation}\label{eq:H_k}
\begin{split}
H_k(f)&=\frac{j2\pi fR_{L,k}\mathbf{o}\mathbf{\bar M}_{k}\mathbf{o}^T}{Z_{u,k}(f)Z_{a}(f)}G_k(f)
\\&=G_k(f)\exp\left[\widetilde H_k(f)\!\right]\exp\left[j\theta_k(f)\right].
\end{split}
\end{equation}
It can be observed from \eqref{eq:H_k} that $\arg\left[H_k(f)\right]=\theta_k(f)$.
Plugging \eqref{eq:G_kf_distribution} into \eqref{eq:H_k}, we can derive the distribution of $|H_k(f)|$ as~\eqref{eq:H_k_module_distribution}. Based on~\eqref{eq:H_k_module_distribution}, we can further obtain the distribution of $|H_k(f)|^2$ as \eqref{eq:H_k_2_distribution}.
Therefore, Theorem 1 follows.
\end{proof}

\textit{Remarks on Theorem 1:} Theorem 1 reveals that if underground mediums are very diverse and irregular, both the amplitude attenuation and the power attenuation of a MI signal follow lognormal distributions. Also, the phase shift is independent of eddy currents. This is because the operating frequencies of MI communications are usually low and the transmission distances between the EAP and TUs are much shorter than wavelengthes, such that the phase shifts due to propagation are omitted in \eqref{eq:G_kf}~\cite{Joint_Channel_Antenna_Modeling}, and therefore not reflected in~\eqref{eq:H_k}.

\section{Performance Analyses}
In this section we analyze the performance of MI based multi-band communication under diverse medium fading channel and derive the SNR distribution, ergodic capacity, average BER, and outage probability for both multiplexing and diversity cases.
\subsection{Performance Analyses for Multiplexing Case}
In multiplexing case, the EAP transmits $N+1$ independent data streams in parallel to the $k$-th TU in $\mathcal T_k$. Detailed performance analysis for the multiplexing case are presented as below.
\subsubsection{SNR distribution}
According to~\eqref{eq:ykfac_n}, we can write the SNR of the $k$-th TU in band $\mathcal{B}_{n}$, denoted by $\gamma_{k,n}$, as follows:
\begin{equation}\label{eq:snr_n}
\begin{split}
\gamma_{k,n}= \frac{\int_{\mathcal{T}_{k}}|H_k(f_{n})|^2|x_{k,n}(t)|^2dt}{\int_{\mathcal{T}_{k}}\mathbb E [|w_{k,n}(t)|^2]dt}
=\frac{|H_k(f_{n})|^2P_{k,n}}{N_0B_n}.
\end{split}
\end{equation}
 Substituting \eqref{eq:H_k_2_distribution} into \eqref{eq:snr_n}, we can obtain the distribution of $\gamma_{k,n}$ as follows:
 \begin{equation}\label{eq:snr_n_distribution}
\begin{split}
\log_e\left[\gamma_{k,n}\right]\sim\mathcal N&\Big( E_{\gamma_{k,n}}, D_{\gamma_{k,n}}\Big),
\end{split}
\end{equation}
where
\begin{equation}
\begin{split}
E_{\gamma_{k,n}}=\bar H_k(f_{k,n})+2\bar{E}_k(f_{k,n})+\log_e\left[\frac{P_{k,n}}{N_0B_n}\right]
\end{split}
\end{equation}
and $D_{\gamma_{k,n}}=4\bar{D}_k(f_{k,n})$.
%\begin{equation}
%\begin{split}
%D_{\gamma_{k,n}}=4\bar{D}_k(f_{k,n}).
%\end{split}
%\end{equation}
The corresponding probability density function (PDF) of $\gamma_{k,n}$, denoted by $\hat f_{\gamma_{k,n}}(x)$, can be given by
\begin{equation}\label{eq:PDF_gamma_k(f)}
\hat f_{\gamma_{k,n}}(x)\!=\!\left\{\begin{array}{l}
\!\!\!\!\dfrac{1}{x \sqrt{2 \pi D_{\gamma_{k,n}}}} \!\exp\!\left[\!{-\dfrac{(\log_e x-E_{\gamma_{k,n}})^2}{2D_{\gamma_{k,n}}}}\!\right]\!, x\!>\!0; \\
\!\!\!\!0, x \leq 0.
\end{array}\right.
\end{equation}
The mean and variance of $\gamma_{k,n}$, denoted by $\widetilde E_{\gamma_{k,n}}$ and $\widetilde D_{\gamma_{k,n}}$, can be calculated as
\begin{equation}
\begin{split}
\widetilde E_{\gamma_{k,n}}=\exp\left[ E_{\gamma_{k,n}}+\frac{1}{2} D_{\gamma_{k,n}}\right]
\end{split}
\end{equation}
and
\begin{equation}
\begin{split}
\widetilde D_{\gamma_{k,n}}=\left(\exp\left[ D_{\gamma_{k,n}}\right]-1\right)\exp\left[2 E_{\gamma_{k,n}}+ D_{\gamma_{k,n}}\right],
\end{split}
\end{equation}
respectively.

\textit{Remarks:} Since $\gamma_{k,n}$ is the product of $|H_K(f_n)|^2$ and the constant $P_{k,n}/(B_nN_0)$, it follows a lognormal distribution. Based on this distribution, we will derive closed-form expressions for ergodic capacity, average BER, and network outage probability in multiplexing case in the following.
\subsubsection{Ergodic capacity}
Based on \eqref{eq:PDF_gamma_k(f)}, we can obtain the ergodic capacity corresponding to the $n$-th band of the $k$-th TU (in Nat/s), denoted by $C_{k,n}$, as follows:
\begin{equation}\label{eq:C_kn}
\begin{split}
C_{k,n}=\int_{0}^{+\infty }\!\!\!\!B_{n}\log_e\left(1+x\right)\hat f_{\gamma_{k,n}}(x)dx.
\end{split}
\end{equation}
\begin{figure*}[t]
\begin{equation}\label{eq:Ergodic_Capacity_Log_Normal_Est}
\begin{aligned}
{C}_{k,n} &\approx\frac{B_{n}}{2}\exp \left(-\frac{E_{\gamma_{k,n}}^2}{2 D_{\gamma_{k,n}}}\right) \sum_{i=1}^8 a_i\left[\operatorname{erfcx}\left(\sqrt{\frac{D_{\gamma_{k,n}}}{2}} i+\frac{E_{\gamma_{k,n}}}{\sqrt{2D_{\gamma_{k,n}}}}\right)+\operatorname{erfcx}\left(\sqrt{\frac{D_{\gamma_{k,n}}}{2}} i-\frac{E_{\gamma_{k,n}}}{ \sqrt{2D_{\gamma_{k,n}}}}\right)\right] \\
& +B_{n}\sqrt{\frac{D_{\gamma_{k,n}}}{2 \pi}} \exp \left(-\frac{E_{\gamma_{k,n}}^2}{2 D_{\gamma_{k,n}}}\right)+\frac{B_{n}E_{\gamma_{k,n}}}{2} \operatorname{erfc}\left(-\frac{E_{\gamma_{k,n}}}{ \sqrt{2D_{\gamma_{k,n}}}}\right).
\end{aligned}
\end{equation}
\hrulefill
\end{figure*}Based on the derivations of~\cite{Ergodic_Capacity_Log_Normal}, \eqref{eq:C_kn} can be accurately estimated with~\eqref{eq:Ergodic_Capacity_Log_Normal_Est}, where $a_i$ are constants defined in~\cite{HandBook}, $\operatorname{erfc}(x)= 1-2\int_0^{x}\exp(-t^2)dt/\sqrt{\pi}$ is the complementary error function, and $\operatorname{erfcx}(x)= \exp(x^2) \operatorname{erfc}(x)$ is the scaled complementary error function. The error of~\eqref{eq:Ergodic_Capacity_Log_Normal_Est} can be upper-bounded by $3B_{n}\times10^{-8}$. Both $\operatorname{erfc}(x)$ and $\operatorname{erfcx}(x)$ are built-in Matlab functions and easily calculated. Thus, \eqref{eq:C_kn} can be numerical evaluated.
%Although~\eqref{eq:C_kn} is accurate and can be evaluated, it is complex and the relationships between the ergodic capacity and diverse medium fading are unclear. We can approximate $1+\gamma_{k,n}$ by a log-normal random. Then, \eqref{eq:C_kn} can be approximated with Fenton-Wilkinson method~\cite{FW_Method} as follows:
%\begin{equation}\label{eq:C_kn_2}
%\begin{split}
%C_{k,n}\approx\log_e \left(\frac{(1+\widetilde E_{\gamma_{k,n}})^2}{\sqrt{1+2 \widetilde E_{\gamma_{k,n}}+\exp({ D_{\gamma_{k,n}}}) \widetilde E_{\gamma_{k,n}}^2}}\right).
%\end{split}
%\end{equation}
Then, we can obtain the ergodic capacity of the multi-band communication network in multiplexing case, denoted by $C_{M}$, as $C_{M}=\frac{1}{K+1}\sum_{k=0}^{K}\sum_{n=0}^{N}{C}_{k,n}$.
%\begin{equation}\label{eq:C_M}
%\begin{split}
%C_{M}=\sum_{k=0}^{K}\sum_{n=0}^{N}{C}_{k,n}.
%\end{split}
%\end{equation}
\subsubsection{Average BER}
Based on~\eqref{eq:PDF_gamma_k(f)}, the average BER corresponding to the transmission in $\mathcal{B}_{n}$ can be given by
\begin{equation}\label{eq:P_ekn}
\begin{split}
P_{e,kn}=\int_{0}^{+\infty}\alpha_{k,n} Q(\sqrt{\beta_{k,n} x})\hat f_{\gamma_{k,n}}(x)dx,
\end{split}
\end{equation}
where $\alpha_{k,n}$ and $\beta_{k,n}$ are the parameters depending on modulation schemes, and $Q(s)\!=\!\int_0^{\pi / 2} \exp \left[-s^2/{2 \sin ^2 x}\right] dx/\pi$~\cite{Wireless_Communications}. Let $\mathcal{M}_{\gamma_{k,n}}(s)=\int_{0}^{+\infty}\hat f_{\gamma_{k,n}}(x)\exp(-sx)dx$ denote the moment generating function (MGF) of $\gamma_{k,n}$. Substituting $Q(s)$ into \eqref{eq:P_ekn}, $P_{e,kn}$ can be re-written with MGF as follows:
\begin{equation}\label{eq:P_ekn_rw}
\begin{split}
P_{e,kn}&\!\!=\!\!\frac{\alpha_{k,n}}{\pi}\! \int_0^{+\infty}\!\int_0^{\pi / 2}\! \exp\left(\frac{-\beta_{k,n}s}{2\sin ^2 x}\right)dx\hat f_{\gamma_{k,n}}(s) d s
\\&\!\!=\!\!\frac{\alpha_{k,n}}{\pi}\! \int_0^{\pi / 2}\!\left[\int_0^{+\infty}\!\!\! \exp\left(\frac{-\beta_{k,n}s}{2\sin ^2 x}\right)\!\hat f_{\gamma_{k,n}}(s) d s\right]\!dx
\\&\!\!=\!\!\frac{\alpha_{k,n}}{\pi} \int_0^{\pi / 2} \mathcal{M}_{\gamma_{k,n}}\left(\frac{\beta_{k,n}}{2\sin ^2 x}\right) d x.
\end{split}
\end{equation}
To the best of our knowledge, there is no general closed-form expression for the lognormal MGF. The MGF of a lognormal random variable can be approximated by a series expansion based on Gauss-Hermite integration. For real $s$, $\mathcal{M}_{\gamma_{k,n}}(s)$ can be approximated as follows~\cite{HandBook}:
\begin{equation}\label{eq:M_kn}
\begin{split}
\mathcal{M}_{\gamma_{k,n}}(s)\!\!=\!\!\sum_{i=1}^{\tilde N} \frac{b_i}{\sqrt{\pi}} \exp\! \left[-s \exp \left({c_i\sqrt{2D_{\gamma_{k,n}}}\!\!+\!\!E_{\gamma_{k,n}}}\right)\!\right],
\end{split}
\end{equation}
where the parameters $b_i$ and $c_i$ can be obtained from existing databases such as numpy. In Section~V we set $\bar N=500$. With $\eqref{eq:M_kn}$, it is convenient to evaluate the average BER given by \eqref{eq:P_ekn_rw} with numerical integration because of the finite integration limits. Then, we can obtain the average BER of the multi-band communication network in multiplexing case, denoted by $P_{e,M}$, as $P_{e,M}={\sum_{k=0}^{K}\sum_{n=0}^{N}P_{e,kn}}/[(N+1)(K+1)]$.
%\begin{equation}\label{eq:P_e_FDM}
%\begin{split}
%P_{e,M}=\frac{\sum_{k=0}^{K}\sum_{n=0}^{N}P_{e,kn}}{(N+1)(K+1)}.
%\end{split}
%\end{equation}
\subsubsection{Outage probability (deep fading probability)}
The outage probability corresponding to the $n$-th band of the $k$-th TU, denoted by $P_{o,kn}$, is defined as the probability that $\gamma_{k,n}$ is less than a threshold $\gamma_{th}$, which is synonymous with the probability of deep fading. $P_{o,kn}$ can be given by
\begin{equation}\label{eq:P_o_kn}
\begin{split}
P_{o,kn}&\!=\!\int_{0}^{\gamma_{th}}\!\!\hat f_{\gamma_{k,n}}(x)dx
\\&\!=\!\hat F_{\gamma_{k,n}}(\gamma_{th})
\!=\!\frac{1}{2}\!\left[\!1\!+\!\operatorname{erf} \left(\frac{\log_e \gamma_{th}-E_{\gamma_{k,n}}}{\sqrt{2D_{\gamma_{k,n}}}}\right)\right],
\end{split}
\end{equation}
where $\hat F_{\gamma_{k,n}}(x)$ is the cumulative distribution function (CDF) of $\gamma_{k,n}$ and $\operatorname{erf}(x)= 2\int_0^{x}\exp(-t^2)dt/\sqrt{\pi}$ is the error function. We denote by $P_{o,M}$ the outage probability of the multi-band network in multiplexing case, which represents the probability that all bands are unavailable for TUs.
Based on \eqref{eq:P_o_kn}, $P_{o,M}$ can be given by $P_{o,M}=\prod_{k=0}^{K} \prod_{n=0}^{N}P_{o,kn}$.
%\begin{equation}\label{eq:P_e_FDM}
%\begin{split}
%P_{o,M}=\prod_{k=0}^{K} \prod_{n=0}^{N}P_{o,kn}.
%\end{split}
%\end{equation}

\subsection{Performance Analyses in Diversity Case}
In diversity case, the $N+1$ bands convey the same information. Thus, in diversity case we can write $x_{k,n}(t)$ as $x_{k,n}(t)=x_{k,b}(t)\exp(j2\pi f_n t)$, where $x_{k,b}(t)$ is the base-band signal transmitted to the $k$-th TU. The power of $x_{k,b}(t)$ on a unit resistance, denoted by $P_{k,b}$, is $P_{k,b}=(K+1)\int_{\mathcal T_k}|x_{k,b}(t)|^2dt$.
Let $X_{k,b}(f)$ denote the Fourier transform of $x_{k,b}(t)$. Most energy of $X_{k,b}(f)$ is concentrated within $\mathcal{B}_{b}=\{f|-B_b/2\leq f\leq B_b/2\}$ and satisfies $\int_{\mathcal{B}_{b}}|X_{k,b}(f)|^2df\!\approx\!\int_{\mathcal T_k}|x_{k,b}(t)|^2dt\!=\!P_{k,b}/(K\!+\!1)$. Then, the receive signal corresponding to the $n$-th band of the $k$-th TU in diversity case can be given by
\begin{equation}\label{eq:ykfac_n_rw}
\begin{split}
y_{k,n}(t)=H_k(f_{n})x_{k,b}(t)\exp(j2\pi f_n t)+w_{k,n}(t),t\in\mathcal{T}_{k}.
\end{split}
\end{equation}
Detailed performance analysis for the diversity case are presented as below.
\subsubsection{SNR distribution}
It is assumed that maximum ratio combining (MRC) is used for the $N+1$ bands. Then, we can write the SNR of the $k$-th TU in diversity case, denoted by $\gamma_{k,d}$, as follows:
\begin{equation}\label{eq:snr_d}
\begin{split}
\gamma_{k,d}=\frac{\left(\sum_{n=0}^{N}|H_k(f_{n})|^2\right)P_{k,b}}{B_bN_0}.
\end{split}
\end{equation}
It is shown in Section III that $|H_k(f_{n})|^2$ follows a lognormal distribution. Thus, to investigate the distribution of $\gamma_{k,d}$, it is desired to know the distribution of the sum of lognormal random variables $\sum_{n=0}^{N}|H_k(f_{n})|^2$. However, the general closed-form expression for the PDF of the lognormal sum are not known. It has been recognized that the lognormal sum can be well approximated by a new lognormal~\cite{Lognormal_Sum}. We can approximate $\sum_{n=0}^{N}|H_k(f_{n})|^2$ as a lognormal variable $\exp[\Lambda_k]$, where $\Lambda_k$ follows $\mathcal N( E_{\Lambda_k}, D_{\Lambda_k})$. Then, matching the MGF of $\exp[\Lambda_k]$ and the MGF of $\sum_{n=0}^{N}|H_k(f_{n})|^2$, we have the equation shown as~\eqref{eq:MGF_Matching}. To obtain $E_{\Lambda_k}$ and $D_{\Lambda_k}$, \eqref{eq:MGF_Matching} can be solved numerically using standard functions such as fsolve in Matlab and NSolve in Mathematica with two real and positive values $s_1$ and $s_2$. Then, the PDF of $\gamma_{k,d}$, denoted by $\hat f_{\gamma_{k,d}}(x)$, can be given by
\begin{equation}\label{eq:PDF_gamma_kd}
\hat f_{\gamma_{k,d}}(x)\!=\!\left\{\begin{array}{l}
\!\!\!\!\dfrac{1}{x \sqrt{2 \pi D_{\Lambda_k}}} \!\exp\!\left[\!{-\dfrac{(\log_e x-E_{\gamma_{k,d}})^2}{2D_{\Lambda_k}}}\!\right]\!, x\!>\!0; \\
\!\!\!\!0, x \leq 0,
\end{array}\right.
\end{equation}
where
\begin{equation}
\begin{split}
E_{\gamma_{k,d}}={E}_{\Lambda_k}+\log_e\left[\frac{P_{k,b}}{B_bN_0}\right].
\end{split}
\end{equation}
The mean and variance of $\gamma_{k,d}$, denoted by $\widetilde E_{\gamma_{k,d}}$ and $\widetilde D_{\gamma_{k,d}}$, can be calculated as
\begin{equation}
\begin{split}
\widetilde E_{\gamma_{k,d}}=\exp\left[ E_{\gamma_{k,d}}+\frac{1}{2} D_{\Lambda_k}\right]
\end{split}
\end{equation}
\begin{figure*}[t]
\begin{equation}\label{eq:MGF_Matching}
\begin{split}
\prod_{n=0}^{N}\left\{\sum_{i=1}^{\tilde N} \frac{b_i}{\sqrt{\pi}} \exp\! \left[-s \exp \left({c_i2\sqrt{2\bar{D}_k(f_n)}\!\!+\!\!\bar H_k(f_n)+2\bar{E}_k(f_n)}\right)\!\right]\right\}=\sum_{i=1}^{\tilde N} \frac{b_i}{\sqrt{\pi}} \exp\! \left[-s \exp \left({c_i\sqrt{2D_{\Lambda_k}}\!\!+\!\!E_{\Lambda_k}}\right)\!\right].
\end{split}
\end{equation}
\hrulefill
\begin{equation}\label{eq:Ergodic_Capacity_Log_Normal_Est_Diversity}
\begin{aligned}
{C}_{k,d} &\approx\frac{B_{b}}{2}\exp \left(-\frac{E_{\gamma_{k,d}}^2}{2 D_{\Lambda_k}}\right) \sum_{i=1}^8 a_i\left[\operatorname{erfcx}\left(\sqrt{\frac{D_{\Lambda_k}}{2}} i+\frac{E_{\gamma_{k,d}}}{\sqrt{2D_{\Lambda_k}}}\right)+\operatorname{erfcx}\left(\sqrt{\frac{D_{\Lambda_k}}{2}} i-\frac{E_{\gamma_{k,d}}}{ \sqrt{2D_{\Lambda_k}}}\right)\right] \\
& +B_{b}\sqrt{\frac{D_{\Lambda_k}}{2 \pi}} \exp \left(-\frac{E_{\gamma_{k,d}}^2}{2 D_{\Lambda_k}}\right)+\frac{B_{b}E_{\gamma_{k,d}}}{2} \operatorname{erfc}\left(-\frac{E_{\gamma_{k,d}}}{ \sqrt{2D_{\Lambda_k}}}\right).
\end{aligned}
\end{equation}
\hrulefill
\end{figure*}and
\begin{equation}
\begin{split}
\widetilde D_{\gamma_{k,d}}=\left(\exp\left[ D_{\Lambda_k}\right]-1\right)\exp\left[2 E_{\gamma_{k,d}}+ D_{\Lambda_k}\right],
\end{split}
\end{equation}
respectively.

\textit{Remarks:} With MGF matching, $\sum_{n=0}^{N}|H_k(f_{n})|^2$ is approximated with a lognormal random variable. Since $\gamma_{k,d}$ is the product of $\sum_{n=0}^{N}|H_k(f_{n})|^2$ and the constant $P_{k,d}/(B_bN_0)$, it follows a lognormal distribution. Based on this distribution, we will derive closed-form expressions for ergodic capacity, average BER, and network outage probability in diversity case in the following.
\subsubsection{Ergodic capacity}Following similar steps as Section~III-A, we can derive the ergodic capacity corresponding to the $k$-th TU in diversity case, denoted by $C_{k,d}$ as \eqref{eq:Ergodic_Capacity_Log_Normal_Est_Diversity}.
%An simple approximation of $C_{k,d}$ can be given by
%\begin{equation}\label{eq:C_kd}
%\begin{split}
%C_{k,d}\approx\log_e \left(\frac{(1+\widetilde E_{\gamma_{k,d}})^2}{\sqrt{1+2 \widetilde E_{\gamma_{k,d}}+\exp({D_{\Lambda_k}}) \widetilde E_{\gamma_{k,d}}^2}}\right).
%\end{split}
%\end{equation}
Then, we can obtain the ergodic capacity of the multi-band communication network in diversity case, denoted by $C_{D}$, as $C_{D}=\frac{1}{K+1}\sum_{k=0}^{K}{C}_{k,d}$.
\subsubsection{Average BER}
The average BER corresponding to the $k$-th TU in diversity case, denoted by $P_{e,k,d}$, can be derived as follows:
\begin{equation}\label{eq:P_ekd_rw}
\begin{split}
P_{e,k,d}=\frac{\alpha_{k,d}}{\pi} \int_0^{\pi / 2} \mathcal{M}_{\gamma_{k,d}}\left(\frac{\beta_{k,d}}{2\sin ^2 x}\right) d x,
\end{split}
\end{equation}
where $\alpha_{k,d}$ and $\beta_{k,d}$ are the parameters depending on the modulation scheme of the $k$-th TU, and
\begin{equation}\label{eq:M_kn_d}
\begin{split}
\mathcal{M}_{\gamma_{k,d}}(s)\!\!=\!\!\sum_{i=1}^{\tilde N} \frac{b_i}{\sqrt{\pi}} \exp\! \left[-s \exp \left({c_i\sqrt{2D_{\Lambda_k}}\!\!+\!\!E_{\gamma_{k,d}}}\right)\!\right].
\end{split}
\end{equation}
Then, we can obtain the average BER of the multi-band communication network in diversity case, denoted by $P_{e,D}$, as $P_{e,D}={\sum_{k=0}^{K}P_{e,k,d}}/(K+1)$.
\subsubsection{Outage probability}
The outage probability corresponding to the $k$-th TU in diversity case can be derived as follows:
\begin{equation}\label{eq:P_o_kd}
\begin{split}
P_{o,k,d}\!=\!\frac{1}{2}\!\left[\!1\!+\!\operatorname{erf} \left(\frac{\log_e \gamma_{th}-E_{\gamma_{k,d}}}{\sqrt{2D_{\Lambda_k}}}\right)\right].
\end{split}
\end{equation}
The outage probability of the multi-band network in diversity case, denoted by $P_{o,D}$, can be given by $P_{o,D}=\prod_{k=0}^{K} P_{o,k,d}$.
%\begin{figure}[htbp]
%\centering
%%\subfigure[PDF of $\gamma$ in single TU for 6 mediums
%%with obvious skin effect in channel.]{\includegraphics[width=0.24\textwidth]{result1_1.eps}}
%%\subfigure[PDF of $\gamma$ in single TU for 6 mediums
%%with obvious skin effect in channel.]{\includegraphics[width=0.24\textwidth]{result1_2.eps}}
%\subfigure[PDF of $\gamma$ in multi-TUs for 8 mediums
%with obvious skin effect in channel.]{\includegraphics[width=0.24\textwidth]{result1_3.eps}}
%\subfigure[PDF of $\gamma$ in multi-TUs for 8 mediums
%with obvious skin effect in channel.]{\includegraphics[width=0.24\textwidth]{result1_4.eps}}
%\caption{The comparison of the theoretical and simulated results of the PDF
%and the CDF of $\gamma$ under CLT approximation.}
%\label{The comparison of the theoretical and simulated results of the PDF
%and the CDF of gamma under CLT approximation.}
%\end{figure}
\section{Numerical Evaluations}
In this section, numerical evaluations are conducted to show the performances of MuReC based multi-band MI communication under diverse medium fading channel. The permeability of mediums is set to $\mu = 4\pi\cdot10^{-7}~\text{H/m}$. The average transmission distance is set to $20~\text{m}$. The coil radius of the EAP and TUs are set to $60~\text{cm}$ and $20~\text{cm}$, respectively. The number of turns of EAP coils and TU coils are set to $200$ and $50$, respectively. Following the calculations given in~\cite{Zhi_Sun_Underground}, we can obtain $R_{a}=2.2619~\Omega$ and $R_{u,k}=0.1885~\Omega$. The transmit power and noise power spectral density are set to $P_t=7~\text{dBw}$ and $-185~\text{dBw/Hz}$. We set MI channels contain ten types of mediums: soil, water, concrete, wood, air, copper, aluminum, silver, gold, lead, zinc, tin, and titanium.
For poor conductors soil, water, concrete, wood, and air, we calculate their skin depths by~\cite{Digital_MI_Steven,Zhi_Sun_2013}
\begin{equation}\label{eq:delta_ki}
{\delta_{k,i}}(f)=\frac{1}{2 \pi f \sqrt{\frac{\mu \epsilon_i}{2}\left(\sqrt{1+\frac{\sigma_i^2}{(2 \pi f)^2 \epsilon_i^2}}-1\right)}},
\end{equation}
where $\epsilon_i$ and $\sigma_i$ are the permittivity and conductivity of the $i$-th medium. The relative permittivities of soil, water, concrete, wood, air are set to 5, 80, 4, 2, 1, respectively. The conductivities of soil, water, concrete, wood, air are set to $10^{-6}$~S/m, $5\times10^{-3}$~S/m, $10^{-5}$~S/m, $10^{-8}$~S/m, and $3\times10^{-15}$~S/m, respectively. For good conductors copper, aluminum, silver, gold, lead, zinc, tin, and titanium, we calculate their skin depths with the following well-known formula:
\begin{equation}\label{eq:delta_ki_good_Conductor}
{\delta_{k,i}}(f)=\sqrt{\frac{1}{\pi f \mu \sigma_i}}.
\end{equation}
The conductivities of copper, aluminum, silver, gold, lead, zinc, tin, and titanium are set to $5.8\times10^{7}$~S/m, $3.5\times10^{7}$~S/m, $6.3\times10^{7}$~S/m, $4.5\times10^{7}$~S/m, $5\times10^{6}$~S/m, $1.6\times10^{7}$~S/m, $9\times10^{6}$~S/m, $2.3\times10^{6}$~S/m, respectively.
As shown in Section~III, no matter what distribution $\Delta r_{k,i}$ follows, the amplitude attenuation and power attenuation of a MI signal follows a lognormal distribution. In this section we assume $\Delta r_{k,i}$ follows uniform distribution and take it as an example to demonstrate the superior performances of the proposed MuReC based multi-band transmission.
%\begin{figure}
%\centering
%\includegraphics[width=0.5\textwidth]{result1.eps.}
%\caption{The PDF of SNR obtained from Monte Carlo simulation and the PDF of SNR derived from Section~IV.}
%\label{fig:PDF}
%\end{figure}

%Figure~\ref{fig:PDF} shows the PDF of SNR obtained from Monte Carlo simulation and the PDF of SNR derived from Section~IV, where both the numbers of users and bands are set to 1, and the center frequency is set to $50~\text{kHz}$. As shown in Fig~\ref{fig:PDF}, the theoretically derived PDF is consistent with the PDF obtained from Monte Carlo simulation.
The CDF of the SNR obtained from Monte Carlo simulation and the CDF derived from Section~IV are shown in Fig.~\ref{fig:CDF}, which also shows the consistency between the theoretically derived distribution and the distribution obtained from Monte Carlo simulation. Figs.~\ref{fig:PDF} and~\ref{fig:CDF} verify that the MI channel in diverse and irregular mediums follows a lognormal distribution fading law.
\begin{figure}
\centering
\includegraphics[width=0.5\textwidth]{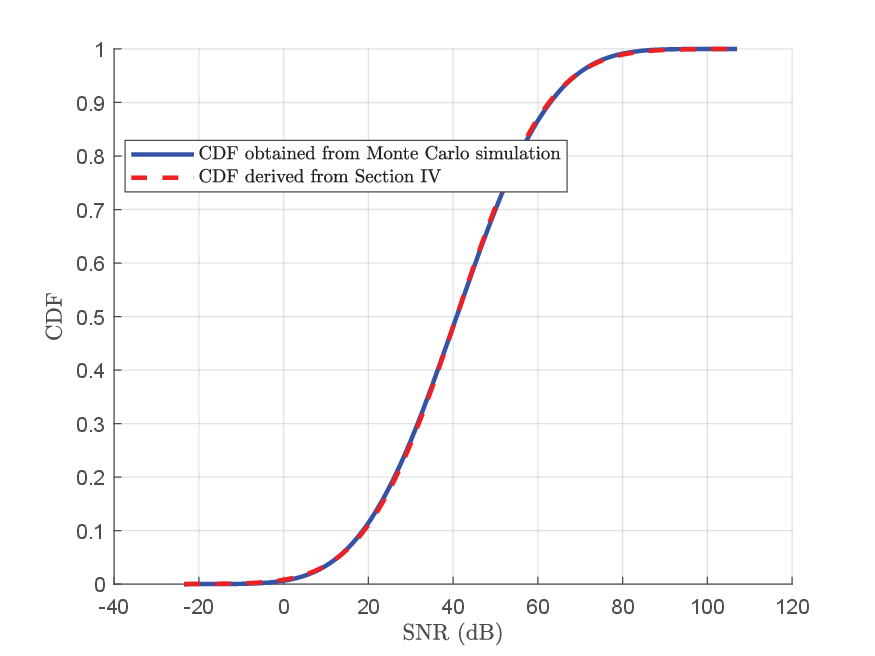}
\caption{The CDF of SNR obtained from Monte Carlo simulation and the CDF of SNR derived from Section~IV.}
\label{fig:CDF}
\end{figure}

Figure~\ref{Ergotic capacity corresponding to the number of subbands under CLT approximation.} compares the ergodic capacity of MuReC based multi-band multiplexing as that of conventional single-band transmission under various numbers of bands, where the number of users is set to $K+1=4$. We set the total bandwidth and the center frequency of single-band transmission to 1~kHz and 50~kHz. The center frequencies of multi-band multiplexing are 40~kHz and 60~kHz. The center frequencies of 4-band multiplexing are 35~kHz, 45~kHz, 55~kHz, and 65~kHz. The center frequencies of 8-band multiplexing are 32.5~kHz, 37.5~kHz, 42.5~kHz, 47.5~kHz, 52.5~kHz, 57.5~kHz, 62.5~kHz, and 67.5~kHz. The bandwidth of each band of multi-band multiplexing is $1/(N+1)$~kHz. As shown in Fig.~\ref{Ergotic capacity corresponding to the number of subbands under CLT approximation.}, MuReC based multi-band multiplexing outperforms single-band transmission in ergodic capacity and the capacity gain increases as the number of bands increases. We can observe that the ergodic capacity can be enhanced by 26\% with 8-band multiplexing, verifying the effectiveness of the proposed scheme in resisting diverse medium fading, which results from diverse and irregular mediums.
\begin{figure}
\centering
\includegraphics[width=0.5\textwidth]{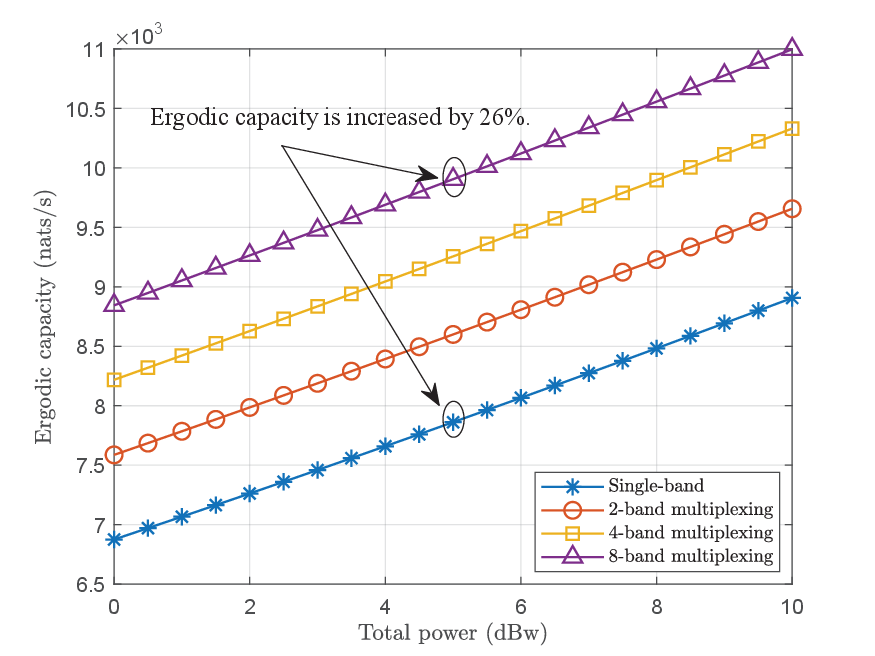}
\caption{Ergodic capacities of conventional single-band transmission and MuReC based multi-band MI multiplexing with various numbers of bands.}
\label{Ergotic capacity corresponding to the number of subbands under CLT
approximation.}
\end{figure}

Figure~\ref{Average BER corresponding to the number of subbands under CLT approximation.} compares the average BER of MuReC based multi-band diversity as that of conventional single-band transmission under various numbers of bands, where the parameter settings are the same as that in Fig.~\ref{Ergotic capacity corresponding to the number of subbands under CLT approximation.}. As shown in Fig.~\ref{Average BER corresponding to the number of subbands under CLT approximation.}, MuReC based multi-band diversity significantly outperforms single-band transmission in average BER and the average BER decreases as the number of bands increases. We can observe that 8-band diversity can reduce the average BER by 4 orders of magnitude, which evaluates the proposed scheme's robustness in complex underground propagation environments.
\begin{figure}
\centering
\includegraphics[width=0.5\textwidth]{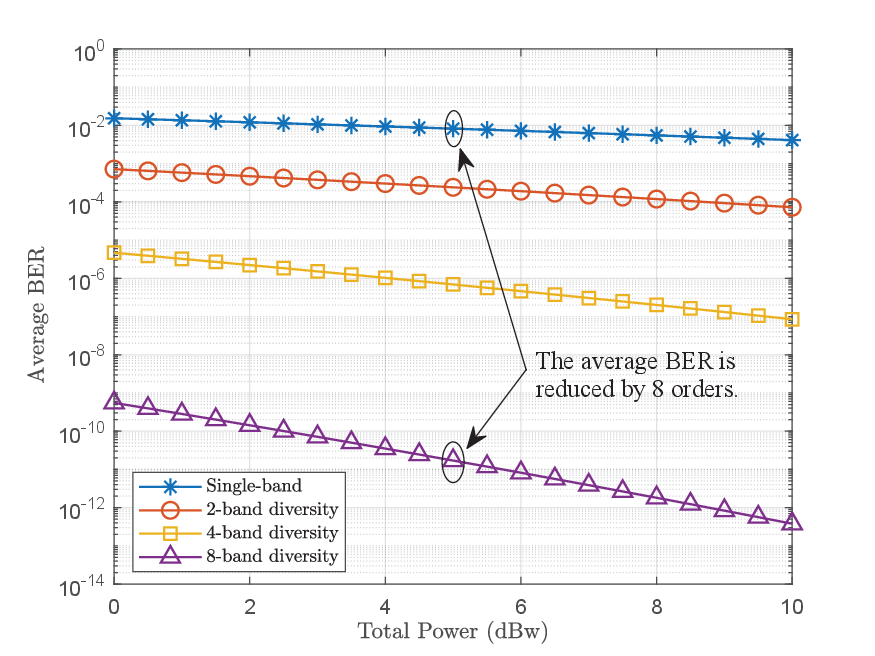}
\caption{Average BERs of conventional single-band transmission and MuReC based multi-band diversity with various numbers of bands.}
\label{Average BER corresponding to the number of subbands under CLT approximation.}
\end{figure}
\begin{figure}
\centering
\includegraphics[width=0.5\textwidth]{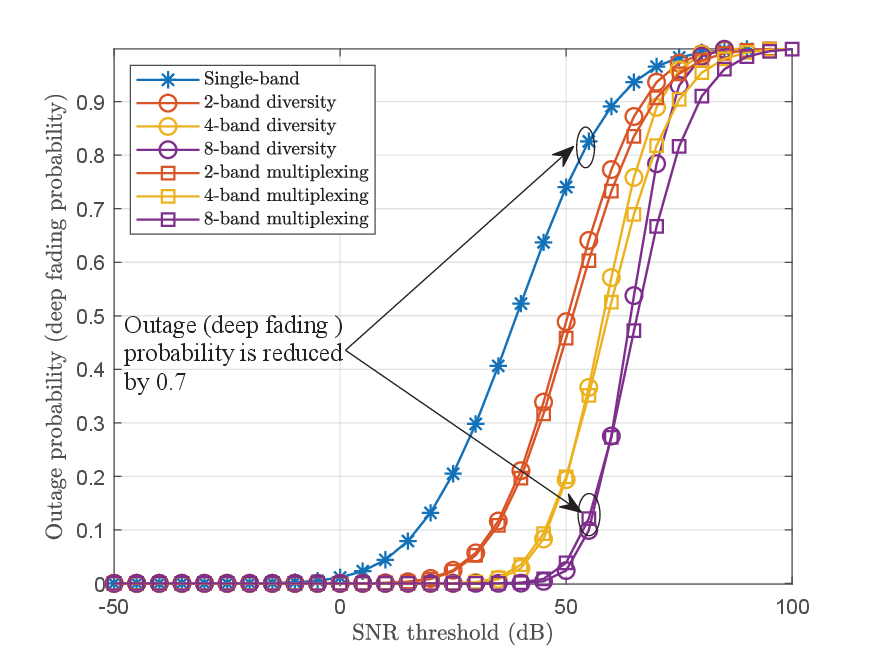}
\caption{Outage probabilities of conventional single-band transmission and MuReC based multi-band diversity with various numbers of bands, where the number of user is $K+1=1$.}
\label{Outage probability corresponding to the number of subband SU.}
\end{figure}

Figure~\ref{Outage probability corresponding to the number of subband SU.} compares the outage probabilities of MuReC based multi-band diversity, MuReC based multi-band multiplexing, and conventional single-band transmission under various numbers of bands, where the number of user is set to $K+1=1$. As shown in Fig.~\ref{Outage probability corresponding to the number of subband SU.}, due to the lack of multiplexing and diversity, the outage probability of single-band transmission increases rapidly as the SNR threshold $\gamma_{th}$ increases. In contrast, MuReC based multi-band schemes effectively utilize the multiple bands to cope with the channel fading, and the outage probabilities decreases as the number of frequencies increases. For the same number of frequencies, the outage probability of MuReC based diversity is slightly smaller than that of MuReC based multiplexing. Setting the number of users to $K+1=4$, Fig.~\ref{Outage probability corresponding to the number of subband MU.} compares the outage probabilities of MuReC based multi-band diversity, MuReC based multi-band multiplexing, and conventional single-band transmission under various numbers of bands. Comparing Figs.~\ref{Outage probability corresponding to the number of subband MU.} and~\ref{Outage probability corresponding to the number of subband SU.}, it can be observed that the outage probability in multi-TUs case is significantly lower than that in single-TU case, which implies that MuReC based multi-band transmission has high reliability in multiple access case.
\begin{figure}
\centering
\includegraphics[width=0.5\textwidth]{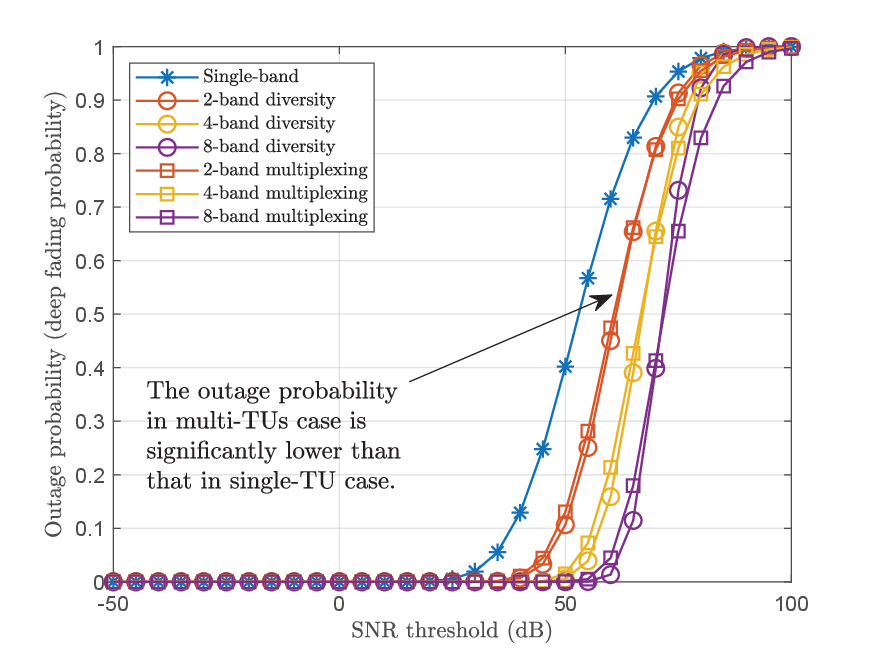}
\caption{Outage probabilities of conventional single-band transmission and MuReC based multi-band diversity with various numbers of bands, where the number of user is $K+1=4$.}
\label{Outage probability corresponding to the number of subband MU.}
\end{figure}
\begin{figure}
\centering
\includegraphics[width=0.5\textwidth]{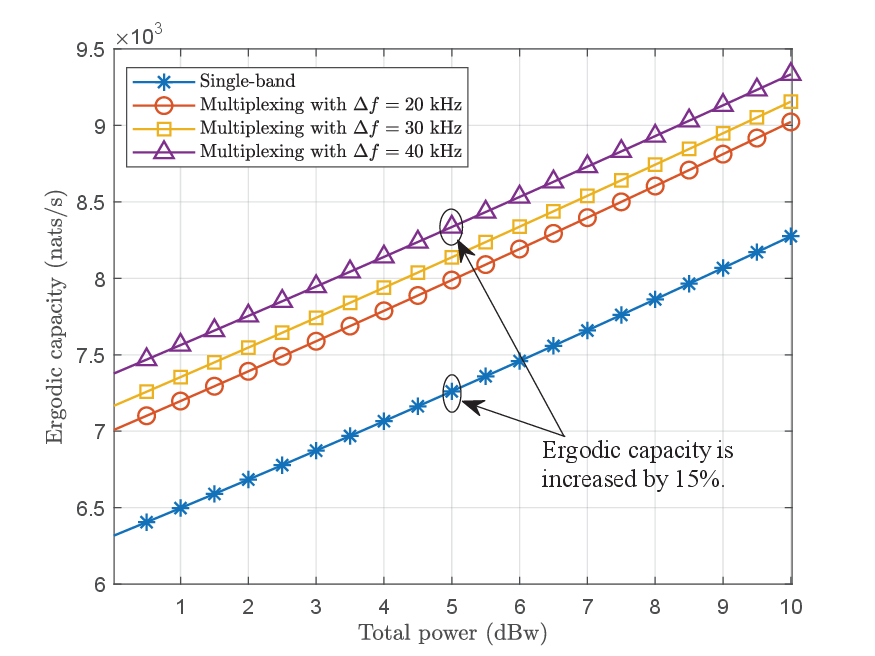}
\caption{Ergodic capacities of conventional single-band transmission and MuReC based multi-band MI multiplexing with various frequency intervals.}
\label{Ergodic capacity corresponding to the Dleta fc under CLT approximation.}
\end{figure}

Figure~\ref{Ergodic capacity corresponding to the Dleta fc under CLT approximation.} shows the ergodic capacity of MuReC based 2-band multiplexing under various frequency intervals and that of conventional single-band transmission, where the number of users is set to $K+1=4$. We set the total bandwidth and the center frequency of single-band transmission to 1~kHz and 50~kHz. The frequency interval of MuReC based multi-band multiplexing is $\Delta f=f_2-f_1$, where $f_2=(50+\Delta f/2)$~kHz and $f_1=(50-\Delta f/2)$~kHz. The bandwidth of each band of 2-band multiplexing is 500~Hz. We can find from Fig.~\ref{Ergodic capacity corresponding to the Dleta fc under CLT approximation.} that the ergodic capacity of MuReC based 2-band multiplexing increases as the frequency interval $\Delta f$ increases. This is because the independence between the two bands increases as the frequency interval $\Delta f$ increases, which reduces the probability of two channels falling into deep fading simultaneously. As compared with single-band transmission, the ergodic capacity can be increased by approximately 15\% using MuReC based 2-band multiplexing with $\Delta f=40$~kHz.

\begin{figure}
\centering
\includegraphics[width=0.5\textwidth]{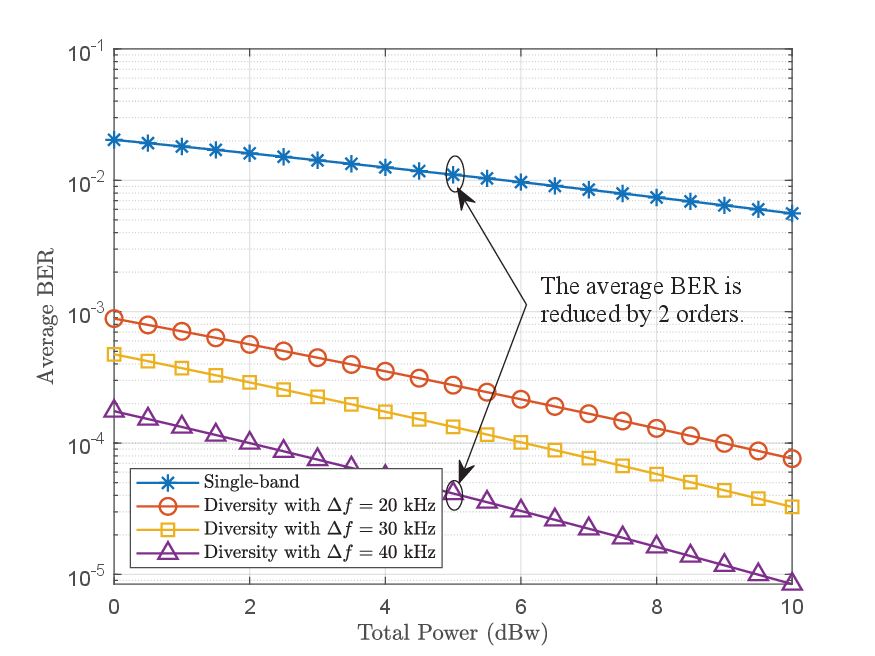}
\caption{Average BERs of conventional single-band transmission and MuReC based multi-band diversity with various frequency intervals.}
\label{Average BER corresponding to the Dleta fc under CLT
approximation.}
\end{figure}

Figure~\ref{Average BER corresponding to the Dleta fc under CLT approximation.} shows the average BER of MuReC based 2-band diversity under various frequency intervals and that of conventional single-band transmission, where the parameter settings are the same as that in Fig.~\ref{Ergodic capacity corresponding to the Dleta fc under CLT approximation.}. It can be found from Fig.~\ref{Average BER corresponding to the Dleta fc under CLT approximation.} that the BER of MuReC based multi-band diversity is significantly lower than that of single-band transmission, and the average BER decreases as $\Delta f$ increases. The average BER can be reduced by 2 orders of magnitude with the frequency interval of $40$~kHz, which implies that the transmission reliability can be enhanced by expanding frequency intervals.
\begin{figure}
\centering
\includegraphics[width=0.5\textwidth]{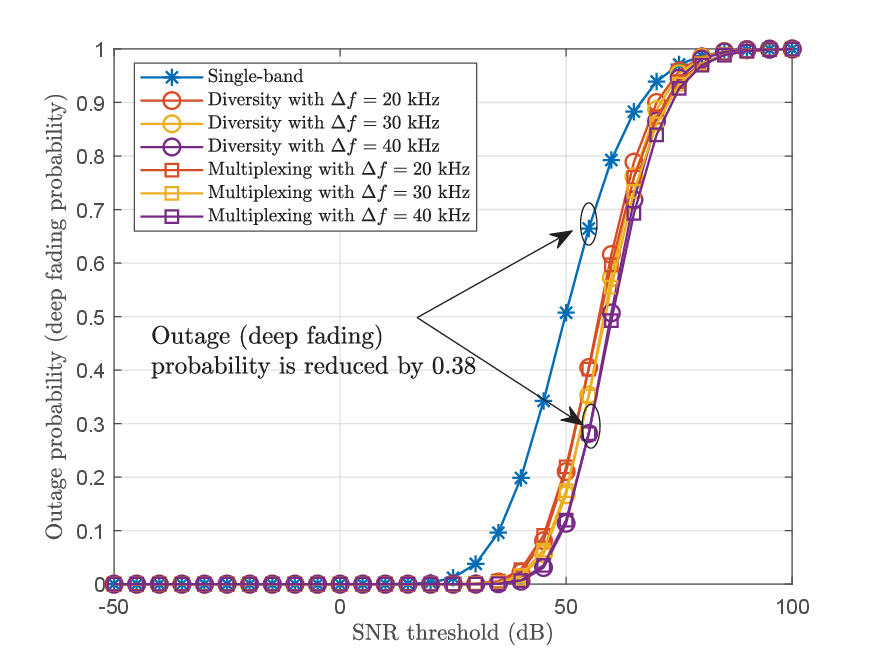}
\caption{Outage probabilities of conventional single-band transmission and MuReC based multi-band diversity with various frequency intervals, where the number of user is $K+1=4$.}
\label{Outage probability corresponding to the Dleta fc under CLT approximation.}
\end{figure}

Figure~\ref{Outage probability corresponding to the Dleta fc under CLT approximation.} shows the outage probabilities of MuReC based multi-band diversity, MuReC based multi-band multiplexing, and conventional single-band transmission under various frequency intervals, where the number of users is set to $K+1=4$. As shown in Fig.~\ref{Outage probability corresponding to the Dleta fc under CLT approximation.}, MuReC based 2-band transmission schemes significantly reduce the outage probability by leveraging frequency intervals and the the outage probability decreases as $\Delta f$ increases. The network outage probability can be reduced by 0.38 with the frequency interval of $40$~kHz, implying that the network reliability in complex post-disaster underground environments can be enhanced by increasing frequency intervals.

\section{Conclusion}
In this paper, we proposed a statistical channel model for MI based underground emergency communications, accounting for the random composition of diverse mediums induced by disasters. The channel fading was analytically and empirically demonstrated to follow a lognormal distribution. To mitigate performance degradation caused by such fading, we introduced MuReC based TD coil to support simultaneous multi-band transmission. By exploiting the low probability of deep fading occurring simultaneously across all bands, the proposed multi-band MI system achieves significant performance improvements over conventional single-band approaches.
Theoretical analysis were conducted to evaluate the system under diverse medium fading. Closed-form expressions for key performance metrics, including SNR probability density functions, ergodic channel capacities, average BERs, and network outage probabilities are derived. Both frequency multiplexing and frequency diversity schemes were investigated. Numerical results confirm that the multi-band MI system offers superior ergodic capacity, lower average BER, and reduced outage probability, making it a robust solution for post-disaster underground environments.
Based on the work in this paper, we will explore adaptive resource allocation, advanced multiplexing and diversity techniques, and experimental validation in practical emergency scenarios to further optimize system performances.
\bibliographystyle{IEEEbib}
\bibliography{ref}
\end{document}